\documentclass[10pt,journal,twoside,letterpaper]{IEEEtran}

\usepackage{amsmath,amssymb,amsthm,amsfonts}
\usepackage{mathtools}
\usepackage{graphicx}
\usepackage[caption=false, font=footnotesize]{subfig}
\usepackage{booktabs}
\usepackage{tikz}
\usetikzlibrary{
   positioning, arrows.meta, calc, fit,
   shapes.geometric, shapes.misc, decorations.pathmorphing
}

\usepackage{cite}
\usepackage{url}
\usepackage{hyperref,enumitem}
\hypersetup{hidelinks}

\newtheorem{theorem}{Theorem}
\newtheorem{corollary}{Corollary}
\newtheorem{definition}{Definition}

\newtheorem{remark}{Remark}

\newcommand{\Pss}{\bar{\mathbf{P}}^s}
\newcommand{\hatxsk}{\hat{\mathbf{x}}^s_{k|k}}
\newcommand{\zhat}[1]{z_{#1}^s}
\newcommand{\confusion}{\mathcal{C}}
\newcommand{\zminus}{z^-}
\newcommand{\zplus}{z^+}
\newcommand{\Te}{T_e}
\newcommand{\PoutZero}{p_{0,\mathrm{out}}}

\begin{document}

\title{React or Predict? A Spectral Rule for Wireless Threshold Detection}

\author{
Aamir~Mahmood,~\IEEEmembership{Senior Member,~IEEE,}
Nho Duc~Tran,~\IEEEmembership{Student Member,~IEEE,}     
\thanks{A. Mahmood and T.N. Duc are with the Department of Computer and Electrical Engineering,  Mid Sweden University, Homlgatan 10, 851 70 Sundsvall, Sweden (e-mail: \{aamir.mahmood, nhoduc.tran\}@miun.se).}
}

\maketitle


\begin{abstract} 
A wireless sensor must alert a remote monitor before a monitored process crosses a safety threshold; an alarm arriving afterward may be too late. The sensor can react to its current estimate or predict ahead and trigger earlier, but the value of such lookahead is not obvious. In some systems it creates an early-alarm opportunity unavailable to the current test, while in others it cannot cross the alarm boundary. This letter gives a practical three-stage rule for deciding when to predict. First, an algebraic spectral test decides at design time whether lookahead is structurally useful: it is redundant exactly when the threshold direction is a left-eigenvector of the dynamics with a non-negative eigenvalue. Second, a closed-form channel decomposition shows that deeper prediction becomes more valuable as the channel degrades, because longer lead windows permit more pre-crossing transmission attempts. Third, simulations show that large gains also require retained prediction magnitude; oscillatory dynamics amplify the benefit through rotation, and a two-sensor setting reveals a sensing-channel tradeoff. \end{abstract}

\begin{IEEEkeywords}
Remote estimation, predictive triggering, threshold decisions, industrial IoT, prediction horizon.
\end{IEEEkeywords}


\section{Introduction}\label{sec:intro}

\IEEEPARstart{M}{any} safety-critical remote monitoring systems share a common control loop: a wireless sensor observes a physical process and must alert a remote monitor when a threshold is crossed. Such loops are pervasive in Industrial IoT, from vibration monitoring and structural health to medical telemetry. In all of them, the difficulty is not only detecting the crossing, but delivering the alarm before it, early enough for the operator to react. Because the wireless link is unreliable, a one-shot transmission may not survive, which makes the alarm transmission policy itself part of the design problem (Fig.~\ref{fig:system_model}).

A natural way to buy time is to \emph{predict ahead}, transmitting early so the alarm has several attempts to survive packet loss before the crossing (Fig.~\ref{fig:system_model}, lower panel). Recent work~\cite{tran_pred} introduced a predictive triggering framework that performs $H$-step prediction at the sensor, combined with age of information (AoI)-controlled refresh and a two-state Markov surrogate for tractable analysis. This sits within a broader literature on event-triggered estimation~\cite{heemels2012}, AoI~\cite{kaul2012,yates2021age}, age of incorrect information (AoII)~\cite{maatouk2020aoii}, goal-oriented communication~\cite{fountoulakis2023}, and remote estimation over unreliable channels~\cite{quevedo2013,nayyar2018}, which primarily optimizes when to transmit and how to trade estimation error against communication cost. However, the prior work has not explicitly characterized when lookahead itself creates threshold-decision information that current-only cannot access.

This raises a design question that is invisible at run time. Looking further ahead increases prediction variance, yet two systems with identical dynamics and wireless links can respond very differently to a longer horizon, depending on the monitored direction $\mathbf{c}$. For some $\mathbf{c}$, the horizon is irrelevant while for others it is a genuine design knob. The distinction is geometric rather than obvious from the channel parameters alone. The sensor must therefore choose whether to react to its current estimate or predict ahead and trigger early.
\begin{figure}
    \centering
    \includegraphics[width=0.85\linewidth]{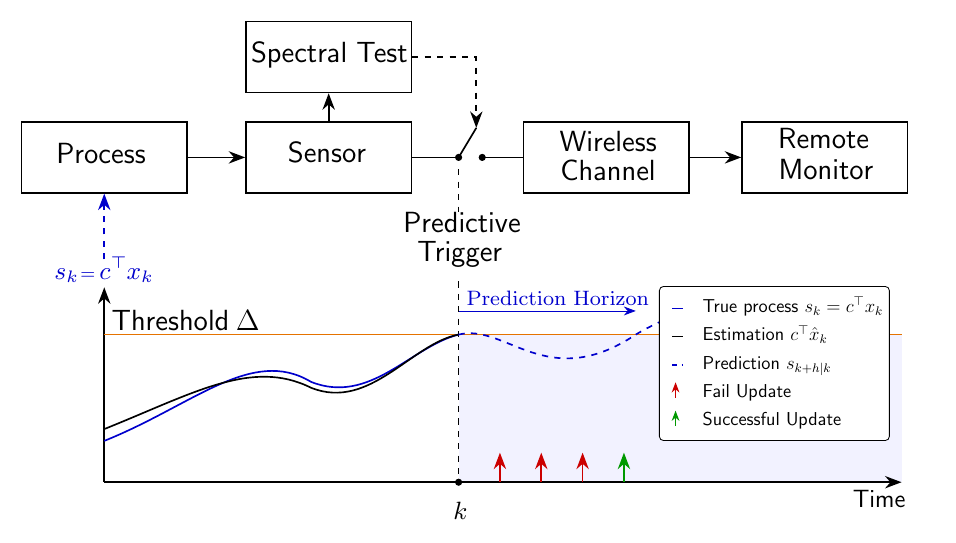}
    \caption{System model and contribution: when the spectral test finds lookahead useful, the sensor shifts from reacting to preemptive triggering, forecasting the crossing $H$ steps ahead to buy transmission attempts before the event.}
    \label{fig:system_model}
    \vspace{-10pt}
\end{figure}
This letter asks when prediction can create an early-alarm opportunity unavailable to the current test (Fig.~\ref{fig:system_model}, upper panel). The answer is spectral: lookahead is redundant exactly when $\mathbf{c}^\top\mathbf{A}=\lambda\mathbf{c}^\top$ with $\lambda\geq0$; the test checks collinearity and, when aligned, the sign of $\lambda$. We then derive a closed-form decomposition showing that, once lookahead is meaningful, deeper prediction becomes more valuable as the channel degrades. Simulations sharpen the picture: locked systems show no benefit, unlocked systems can, but a large gain also requires that the dynamics retain prediction magnitude over the horizon, with oscillatory modes amplifying it through rotation. Finally, a two-sensor extension trades sensing noise for channel diversity and widens the benefit of deeper lookahead.

\section{Problem Setup}\label{sec:model}

\subsection{Early Detection and the Performance Metric}
\label{sec:process}
We monitor a physical process whose state evolves as a stable linear system,
$\mathbf{x}_{k+1} = \mathbf{A}\mathbf{x}_k + \mathbf{w}_k$, with
$\mathbf{w}_k \sim \mathcal{N}(\mathbf{0},\mathbf{Q}_w)$,
$\mathbf{x}_k \in \mathbb{R}^n$, and $\rho(\mathbf{A}) < 1$.
The safety-relevant quantity is the scalar projection
$s_k = \mathbf{c}^\top \mathbf{x}_k$ along a monitored direction
$\mathbf{c} \in \mathbb{R}^n$. A threshold crossing occurs at time $\Te$ when
$s_{\Te-1} < \Delta$ and $s_{\Te} \geq \Delta$. The pair $(\mathbf{A},\mathbf{c})$
governs the geometry of future crossings, a fact the spectral
characterization of Sec.~\ref{sec:main} turns into a design test.

The object of interest is not the crossing itself but its timing. Let $T_{\mathrm{alarm}}$ be the slot at which the remote monitor raises the alarm, and define the lead time $L := \Te - T_{\mathrm{alarm}}$: a positive lead means the alarm arrives before the crossing. The performance metric is the probability of early detection, $P(L>0)$, taken over the process and measurement noise and the random channel outcomes. Achieving $L > 0$ requires two things in sequence: the sensor must decide to alarm early, and that alarm must survive the wireless link, the structural and operational questions that Sec.~\ref{sec:main} answers in turn.

\subsection{Sensor-Side Prediction and the Predictive Trigger}
The sensor measures $y_k^s = \mathbf{c}^\top \mathbf{x}_k + v_k^s$ with
$v_k^s \sim \mathcal{N}(0,R_v^s)$ and maintains a Kalman filter with posterior
$(\hatxsk,\mathbf{P}^s_{k|k})$; detectability of $(\mathbf{A},\mathbf{c}^\top)$
guarantees a steady-state posterior covariance $\Pss$. To decide early, the
sensor predicts the projection forward. At horizon $j \in \{0,1,\ldots,H\}$, it
forms the normalized confidence
\begin{equation}\label{eq:zj}
    \zhat{j}(k) := \bigl(\Delta - \mathbf{c}^\top \mathbf{A}^j \hatxsk\bigr)
    \big/ \sqrt{\mathbf{c}^\top \mathbf{P}^s_{k+j|k} \mathbf{c}},
\end{equation}
with $\mathbf{P}^s_{k+j|k} = \mathbf{A}^j \Pss (\mathbf{A}^j)^\top
+ \sum_{i=0}^{j-1} \mathbf{A}^i \mathbf{Q}_w (\mathbf{A}^i)^\top$.

The decision is read through a confusion region~\cite{tran_pred}
$\confusion := (\zminus,\zplus)$, with $\zminus = \Phi^{-1}(\alpha_{\mathrm{FP}})$
and $\zplus = \Phi^{-1}(1-\alpha_{\mathrm{FN}})$ for false-positive and
false-negative tolerances $(\alpha_{\mathrm{FP}},\alpha_{\mathrm{FN}})\in(0,0.5)^2$.
A score $\zhat{j} \le \zminus$ supports the alarm decision, $\zhat{j} \ge \zplus$
supports no-alarm, and a score inside $\confusion$ is inconclusive. The trigger is
task-oriented: the sensor transmits at slot $k$ whenever some current or predicted
score supports alarm, i.e., $\zhat{j}(k)\le\zminus$ for some
$j\in\{0,\ldots,H\}$. Lookahead creates an additional early-alarm opportunity when
$\zhat{0}(k)>\zminus$ but $\zhat{j}(k)\le\zminus$ for some $j\in\{1,\ldots,H\}$. The sensor retransmits the alarm each slot until it is delivered (assuming idealized delivery feedback), so a lead of $j$ gives $j$ attempts and success probability $1-p_{\mathrm{fail}}^j$; imperfect feedback would only reduce this.

\subsection{Wireless Channel}
A transmission attempt may fail via two mechanisms on different time scales. On the slow scale, the link may enter an \emph{outage}, a shadowing or blockage interval during which it is unusable. A blockage begins on a given slot with probability $\PoutZero$ and persists for a Weibull-distributed duration, so consecutive attempts within a blockage are lost together. On the fast scale, when the link is available, the packet undergoes small-scale Rayleigh fading with power gain $g$ and fails decoding with the finite-blocklength error probability $p_{\mathrm{PER}}(\gamma)$~\cite{polyanskiy2010} at the instantaneous SNR $\gamma$. Averaging over the fading, the marginal one-attempt failure probability is
\begin{equation}\label{eq:pfail}
    p_{\mathrm{fail}}
    = \pi_{\mathrm{out}}(\PoutZero)
    + \bigl(1-\pi_{\mathrm{out}}(\PoutZero)\bigr)\,
    \mathbb{E}_g\!\left[p_{\mathrm{PER}}(\gamma)\right],
\end{equation}
combining blockage and decoding error in the standard two-cause form for short-packet links. Here $\pi_{\mathrm{out}}(\PoutZero)$ is the marginal outage probability induced by the onset probability $\PoutZero$ and the Weibull burst-duration law. We use $\PoutZero$ as the simulation channel-stress parameter; increasing $\PoutZero$ monotonically increases $\pi_{\mathrm{out}}$. The two mechanisms correlate differently across attempts. Fading is renewed each attempt, so each additional pre-crossing attempt is a fresh chance to succeed; blockage is temporally correlated, and attempts within one outage fail together. The analysis of Sec.~\ref{sec:main} nonetheless treats attempts as independent with marginal $p_{\mathrm{fail}}$; Sec.~\ref{sec:empirical} quantifies the gap.


\section{Spectral Characterization}\label{sec:main}

The usefulness of lookahead depends on how the dynamics transform the monitored direction. Non-collinearity creates a new threshold direction, while aligned dynamics yield a one-dimensional affine map whose sign determines whether prediction can cross the alarm boundary.

\subsection{When does lookahead create a new opportunity?}

Suppose the monitored direction is aligned with the dynamics in the sense that
$\mathbf{c}^\top \mathbf{A} = \lambda \mathbf{c}^\top$
for some scalar $\lambda$. Then, by induction,
\begin{equation}
    \mathbf{c}^\top \mathbf{A}^j = \lambda^j \mathbf{c}^\top, \; j\geq 0.
\end{equation}
Substituting into~\eqref{eq:zj} shows that every predicted $z$-score is an affine function of the current one. Thus aligned prediction remains one-dimensional; whether it creates a new alarm depends on the sign of $\lambda$ and the resulting affine map. If $\mathbf{c}^\top\mathbf{A}$ is not collinear with $\mathbf{c}^\top$, the current and one-step predicted tests are different linear functionals of the state, allowing a current non-alarm to become a predicted alarm.

\begin{definition}[Improvement set]
The improvement set $\mathcal{I}_H$ contains posterior states for which the current test does not support alarm but some predicted test does:
$\mathcal{I}_H := \{\hat{\mathbf{x}}:\zhat{0}(\hat{\mathbf{x}})>\zminus,\ \exists j\in\{1,\ldots,H\}\text{ such that }\zhat{j}(\hat{\mathbf{x}})\le\zminus\}$.
\end{definition}

If $\mathcal{I}_H\neq\varnothing$, lookahead can create an early-alarm opportunity unavailable to the current-only trigger; if $\mathcal{I}_H=\varnothing$, it cannot.

\begin{theorem}[Spectral characterization]\label{thm:spectral}
Consider the predictive triggering setting of Sec.~\ref{sec:model} with stable $\mathbf{A}$, detectable $(\mathbf{A},\mathbf{c}^\top)$, steady-state posterior $\bar{\mathbf{P}}^s$, and the operationally meaningful detection condition $\Delta/\sigma_\infty \in \mathcal{C}$, where $\sigma_\infty := \sqrt{\mathbf{c}^\top \mathbf{P}^x_\infty \mathbf{c}}$
and $\mathbf{P}^x_\infty = \sum_{i=0}^{\infty}\mathbf{A}^i\mathbf{Q}_w(\mathbf{A}^i)^\top$
is the open-loop steady-state covariance.
Then:
\begin{enumerate}[label=(\alph*),leftmargin=*]
    \item
          If $\mathbf{c}^\top \mathbf{A} \neq \lambda \mathbf{c}^\top$
          for every $\lambda \in \mathbb{R}$,
          then $\mathcal{I}_H \neq \varnothing$ for every $H \geq 1$.
    \item 
          If $\mathbf{c}^\top \mathbf{A} = \lambda \mathbf{c}^\top$
          for 
          $\lambda \geq 0$,
          then $\mathcal{I}_H = \varnothing$ for every $H \geq 1$.
    \item
          If $\mathbf{c}^\top \mathbf{A} = \lambda \mathbf{c}^\top$
          for $\lambda < 0$,
          then $\mathcal{I}_H \neq \varnothing$ for every $H \geq 1$.
\end{enumerate}
\end{theorem}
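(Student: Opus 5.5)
The plan is to reduce every case to the two scalars $m:=\mathbf{c}^\top\hat{\mathbf{x}}$ and $\mathbf{c}^\top\mathbf{A}^j\hat{\mathbf{x}}$, writing $\sigma_j^2:=\mathbf{c}^\top\mathbf{P}^s_{k+j|k}\mathbf{c}$ so that $\zhat{j}=(\Delta-\mathbf{c}^\top\mathbf{A}^j\hat{\mathbf{x}})/\sigma_j$. I assume $\sigma_j>0$, which is needed for the scores in \eqref{eq:zj} to be defined. Two facts are used throughout. First, $\zminus=\Phi^{-1}(\alpha_{\mathrm{FP}})<0$ because $\alpha_{\mathrm{FP}}<0.5$. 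Second, $\hat{\mathbf{x}}$ ranges over all of $\mathbb{R}^n$. Since $j=1$ is always admissible for $H\ge1$, it suffices in (a) and (c) to exhibit a state with $\zhat{0}>\zminus$ and $\zhat{1}\le\zminus$.

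\emph{Case (a).} If $\mathbf{c}^\top\mathbf{A}$ is not collinear with $\mathbf{c}^\top$, then $\mathbf{c}$ and $\mathbf{A}^\top\mathbf{c}$ are linearly independent. Hence the linear map $\hat{\mathbf{x}}\mapsto(\mathbf{c}^\top\hat{\mathbf{x}},\mathbf{c}^\top\mathbf{A}\hat{\mathbf{x}})$ is onto $\mathbb{R}^2$. Choose $\hat{\mathbf{x}}$ with $\mathbf{c}^\top\hat{\mathbf{x}}=\Delta$, which gives $\zhat{0}=0>\zminus$. Also require $\mathbf{c}^\top\mathbf{A}\hat{\mathbf{x}}\ge\Delta-\zminus\sigma_1$, which gives $\zhat{1}\le\zminus$. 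This state lies in $\mathcal{I}_H$.

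\emph{Case (c).} In the aligned case, $\mathbf{c}^\top\mathbf{A}^j\hat{\mathbf{x}}=\lambda^j m$. Take $m\to-\infty$. Then $\zhat{0}=(\Delta-m)/\sigma_0\to+\infty$, while $\lambda m\to+\infty$ because $\lambda<0$, so $\zhat{1}\to-\infty$. For $m$ sufficiently negative both requirements hold.

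\emph{Case (b).} This case carries the real content. Because $\rho(\mathbf{A})<1$ and $\lambda$ is a real eigenvalue of $\mathbf{A}^\top$, we have $0\le\lambda<1$. Put $a:=\lambda^j\in[0,1)$ for $j\ge1$. Alignment collapses the covariance terms:
\begin{equation*}
\mathbf{c}^\top\mathbf{A}^i\mathbf{Q}_w(\mathbf{A}^i)^\top\mathbf{c}=\lambda^{2i}\,\mathbf{c}^\top\mathbf{Q}_w\mathbf{c},
\end{equation*}
so $\sigma_\infty^2=\mathbf{c}^\top\mathbf{Q}_w\mathbf{c}/(1-\lambda^2)$ and
\begin{equation*}
\sigma_j^2=a^2\sigma_0^2+(1-a^2)\sigma_\infty^2 .
\end{equation*}

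If $a=0$ (only possible when $\lambda=0$), then $\zhat{j}=\Delta/\sigma_\infty>\zminus$ by the detection condition, so the predicted test never alarms. If $a>0$, the predicted test alarms iff $m\ge(\Delta-\zminus\sigma_j)/a$, and the current test alarms iff $m\ge\Delta-\zminus\sigma_0$. Hence $\mathcal{I}_H=\varnothing$ reduces to the inequality
\begin{equation*}
\Delta-\zminus\sigma_j\ \ge\ a(\Delta-\zminus\sigma_0).
\end{equation*}

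The key step, which I expect to be the main obstacle, is the lower bound $\sigma_j\ge a\sigma_0+(1-a)\sigma_\infty$. Squaring the right-hand side and subtracting it from $\sigma_j^2$ leaves
\begin{equation*}
2a(1-a)\,\sigma_\infty(\sigma_\infty-\sigma_0),
\end{equation*}
so the bound requires $\sigma_0\le\sigma_\infty$. I would justify this by the standard fact that the steady-state Kalman posterior covariance is dominated by the open-loop stationary covariance, $\bar{\mathbf{P}}^s\preceq\mathbf{P}^x_\infty$, since conditioning on measurements cannot increase error variance relative to the unconditional prior.

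Granting the bound and using $-\zminus>0$, I get
\begin{equation*}
\Delta-\zminus\sigma_j-a(\Delta-\zminus\sigma_0)\ \ge\ (1-a)\,\sigma_\infty\bigl(\Delta/\sigma_\infty-\zminus\bigr).
\end{equation*}
The right-hand side is strictly positive because $\Delta/\sigma_\infty\in\confusion$ gives $\Delta/\sigma_\infty>\zminus$. Thus every predicted alarm is already a current alarm, for every $j\le H$, and $\mathcal{I}_H=\varnothing$.
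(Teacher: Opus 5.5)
Your proof is correct and follows the paper's argument. Parts (a) and (c) use the same surjectivity and limiting arguments. In (b), your reduced inequality $\Delta - z^-\sigma_j \ge \lambda^j(\Delta - z^-\sigma_0)$ is a rearrangement of the paper's condition $p_j \ge z^-$, and you settle it with the same variance identity and the same key bound $\sigma_j \ge \lambda^j\sigma_0 + (1-\lambda^j)\sigma_\infty$ (the paper's $\sigma_\infty - \sigma_j \le \lambda^j(\sigma_\infty - \sigma_0)$). The differences are minor: comparing alarm thresholds in $m$ avoids the paper's sign and convexity bookkeeping on $p_j$, and you justify $\sigma_0 \le \sigma_\infty$ via $\bar{\mathbf{P}}^s \preceq \mathbf{P}^x_\infty$, which the paper only asserts.
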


\begin{proof}
Denote $\sigma_0 := \sqrt{\mathbf{c}^\top \Pss \mathbf{c}}$
and $\sigma_j := \sqrt{\mathbf{c}^\top \mathbf{P}^s_{k+j|k} \mathbf{c}}$.

\emph{(a).}
If $\mathbf{c}^\top\mathbf{A} \neq \lambda\mathbf{c}^\top$ for any $\lambda$,
then $\mathbf{c}^\top$ and $\mathbf{c}^\top\mathbf{A}$ are linearly independent
in $(\mathbb{R}^n)^*$.
The map $T\colon\mathbb{R}^n\to\mathbb{R}^2$,
$T(\hat{\mathbf{x}}) = \bigl(\mathbf{c}^\top\hat{\mathbf{x}},\,
\mathbf{c}^\top\mathbf{A}\hat{\mathbf{x}}\bigr)$,
has rank~$2$ and is surjective.
Choose $u^* = \Delta$ and $v^* = \Delta + 2|z^-|\sigma_1$.
Surjectivity yields $\hat{\mathbf{x}}^*$ with $T(\hat{\mathbf{x}}^*)=(u^*,v^*)$,
giving $\hat{z}_0(\hat{\mathbf{x}}^*) = 0 > z^-$
and $\hat{z}_1(\hat{\mathbf{x}}^*) = -2|z^-| < z^-$,
so $\hat{\mathbf{x}}^* \in \mathcal{I}_1 \subseteq \mathcal{I}_H$.

\emph{(b).}
Suppose $\mathbf{c}^\top\mathbf{A} = \lambda\mathbf{c}^\top$ with $\lambda \geq 0$.
By induction $\mathbf{c}^\top\mathbf{A}^j = \lambda^j\mathbf{c}^\top$,
so substituting into~\eqref{eq:zj} gives
\begin{equation}\label{eq:affine}
    \hat{z}_j(\hat{\mathbf{x}})
      = a_j\,\hat{z}_0(\hat{\mathbf{x}}) + (1 - a_j)\,p_j, \;
    a_j := \lambda^j\sigma_0/\sigma_j \in [0,1),
\end{equation}
where $p_j := \Delta(1-\lambda^j)/(\sigma_j - \lambda^j\sigma_0)$
for $j \geq 1$.
Because $a_j \in [0,1)$, equation~\eqref{eq:affine} is a convex combination
of $\hat{z}_0$ and $p_j$; it therefore suffices to show $p_j \in \mathcal{C}$.
The aligned structure $\mathbf{c}^\top\mathbf{A}^i = \lambda^i\mathbf{c}^\top$
yields the closed-form variance identity\footnote{From $(\mathbf{A}^i)^\top\mathbf{c} = \lambda^i\mathbf{c}$, the recursion gives $\sigma_j^2 = \lambda^{2j}\sigma_0^2 + (\mathbf{c}^\top\mathbf{Q}_w\mathbf{c})\sum_{i=0}^{j-1}\lambda^{2i}$; letting $j\to\infty$ gives $\sigma_\infty^2 = (\mathbf{c}^\top\mathbf{Q}_w\mathbf{c})/(1-\lambda^2)$, and substituting back yields~\eqref{eq:sigma_j_aligned}.}
\begin{equation}\label{eq:sigma_j_aligned}
    \sigma_j^2 \;=\; \sigma_\infty^2 - \lambda^{2j}\!\left(\sigma_\infty^2 - \sigma_0^2\right),
\end{equation}
where $\sigma_0 \leq \sigma_\infty$, so $\sigma_j$ is non-decreasing on $[\sigma_0,\sigma_\infty]$.
Since $\lambda \geq 0$, we have $\operatorname{sign}(p_j) = \operatorname{sign}(\Delta)$,
so $p_j$ lies on the same side of~$0$ as $\Delta/\sigma_\infty$.
It remains to establish $|p_j| \leq |\Delta/\sigma_\infty|$,
equivalently $\sigma_\infty - \sigma_j \leq \lambda^j(\sigma_\infty - \sigma_0)$. If $\sigma_\infty=\sigma_0$, then~\eqref{eq:sigma_j_aligned} gives
$\sigma_j=\sigma_\infty$ for all $j$, so
$p_j=\Delta/\sigma_\infty\in\mathcal{C}$ and the claim follows. Otherwise, 
from~\eqref{eq:sigma_j_aligned},
\[
    \sigma_\infty - \sigma_j
    = \frac{\lambda^{2j}(\sigma_\infty^2-\sigma_0^2)}{\sigma_\infty+\sigma_j}
    = \lambda^{2j}(\sigma_\infty-\sigma_0)\,
      \frac{\sigma_\infty+\sigma_0}{\sigma_\infty+\sigma_j}.
\]
Substitute to $\sigma_\infty - \sigma_j \leq \lambda^j(\sigma_\infty - \sigma_0)$ and cancel $(\sigma_\infty - \sigma_0) > 0$,
the inequality reduces to
$\lambda^j(\sigma_\infty+\sigma_0)/(\sigma_\infty+\sigma_j) \leq 1$,
which holds because $\lambda^j \leq 1$ and $\sigma_j \geq \sigma_0$.
Hence $p_j$ lies strictly between~$0$ and $\Delta/\sigma_\infty$
(with $p_j = 0$ when $\Delta = 0$).
Since both $0 \in \mathcal{C}$ and $\Delta/\sigma_\infty \in \mathcal{C}$,
convexity of $\mathcal{C}$ gives $p_j \in \mathcal{C}$.
Since $p_j\in\mathcal{C}$, we have $p_j>z^-$. Hence, whenever
$\hat{z}_0>z^-$, the convex combination~\eqref{eq:affine} gives
$\hat{z}_j>z^-$ for every $j\geq1$, so
$\mathcal{I}_H=\varnothing$ for all $H\geq1$.

\emph{(c).}
If $\lambda<0$, then~\eqref{eq:affine} has $a_1<0$.
As $\hat{z}_0$ ranges over $\mathbb{R}$,
$\hat{z}_1=a_1\hat{z}_0+(1-a_1)p_1\to-\infty$
as $\hat{z}_0\to\infty$. Hence some posterior state satisfies
$\hat{z}_0>z^-$ and $\hat{z}_1\le z^-$,
so $\mathcal{I}_1\neq\varnothing$ and therefore
$\mathcal{I}_H\neq\varnothing$ for every $H\geq1$.
\end{proof}

The theorem shows that lookahead is structurally redundant only in the non-negative aligned case: a current non-alarm then remains non-alarm at every horizon. Otherwise, there exist posterior states for which prediction creates an alarm unavailable to the current test. This is an existence guarantee, not a guarantee of appreciable occurrence probability.

\begin{remark}\label{rem:cond_i}
If $\Delta/\sigma_\infty \notin \mathcal{C}$, the predicted $z$-score
$\hat{z}_j(\hat{\mathbf{x}}) \to \Delta/\sigma_\infty$ eventually exits~$\mathcal{C}$
for every posterior $\hat{\mathbf{x}}$, forcing a permanent alarm or non-alarm decision.
This contradicts the recurring threshold-crossing model of Sec.~\ref{sec:process},
so $\Delta/\sigma_\infty \in \mathcal{C}$ must hold in any operationally meaningful setting.
\end{remark}

\begin{corollary}[Scalar systems]\label{cor:scalar}
For $n = 1$, $c^\top A =  A\,c^\top$, so $c$
is automatically a left-eigenvector with eigenvalue $\lambda = A$.
\begin{itemize}[leftmargin=*]
    \item If $A \in [0,1)$: Theorem~\ref{thm:spectral}(b) applies and
          $\mathcal{I}_H=\varnothing$ for every $H\geq1$; predictive
          lookahead is structurally redundant.
    \item If $A \in (-1,0)$: Theorem~\ref{thm:spectral}(c) applies and
          $\mathcal{I}_H\neq\varnothing$ for every $H\geq1$.
\end{itemize}
\end{corollary}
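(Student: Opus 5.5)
The plan is to reduce the scalar case to the aligned cases of Theorem~\ref{thm:spectral}. The only structural point is that for $n=1$ every direction is an eigen-direction. With $\mathbf{A}=A\in\mathbb{R}$ and $\mathbf{c}=c\in\mathbb{R}$, scalar multiplication is commutative, so $c^\top A = cA = Ac = A\,c^\top$. Thus the alignment hypothesis $\mathbf{c}^\top\mathbf{A}=\lambda\mathbf{c}^\top$ holds with $\lambda=A$. Case (a) of the theorem, non-collinearity, can never occur when $n=1$, since the dual space $(\mathbb{R}^1)^*$ is one-dimensional.

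Next I will check that the standing hypotheses of Theorem~\ref{thm:spectral} are met. Stability $\rho(\mathbf{A})<1$ means $|A|<1$, which gives $A\in(-1,1)$. This splits exactly into $[0,1)$ and $(-1,0)$. Detectability of $(A,c)$ holds for any $c\neq 0$. If $c=0$, the monitored projection is identically zero, so the detection condition $\Delta/\sigma_\infty\in\mathcal{C}$ degenerates because $\sigma_\infty=0$. I will therefore assume $c\neq 0$ implicitly, as the theorem itself does through $\sigma_\infty>0$. The condition $\Delta/\sigma_\infty\in\mathcal{C}$ is inherited as an assumption.

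The two bullets then follow directly:
\begin{itemize}
\item if $A\in[0,1)$, then $\lambda=A\ge 0$, and Theorem~\ref{thm:spectral}(b) gives $\mathcal{I}_H=\varnothing$ for all $H\ge1$;
\item if $A\in(-1,0)$, then $\lambda<0$, and Theorem~\ref{thm:spectral}(c) gives $\mathcal{I}_H\neq\varnothing$.
\end{itemize}
In the second case I would also note the intuition behind (c): $a_1=A\sigma_0/\sigma_1<0$, so a large current score flips sign under one-step prediction.

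There is no real obstacle here. The only care point is the degenerate case $c=0$, or $Q_w=0$, where $\sigma_\infty=0$. I would handle it by noting that the theorem's hypotheses already exclude it. I would also remark that the boundary value $A=0$ falls under (b): the predicted score is then the constant $p_j=\Delta/\sigma_\infty\in\mathcal{C}$.
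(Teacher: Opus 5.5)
Your proposal is correct and follows the paper's own reasoning: for $n=1$ the alignment $c^\top A = A\,c^\top$ holds trivially with $\lambda = A$, stability gives $A\in(-1,1)$, and the two bullets are exactly Theorem~\ref{thm:spectral}(b) and (c) applied to the cases $A\in[0,1)$ and $A\in(-1,0)$. Your extra checks are consistent with the theorem's proof: the $A=0$ boundary gives $\hat z_j \equiv \Delta/\sigma_\infty$ for $j\ge1$, and you are right to exclude $c=0$. The only correction is that for $|A|<1$ detectability holds for every $c$, including $c=0$, so the actual reason to exclude $c=0$ is that $\sigma_\infty=0$ and the $z$-scores are then undefined.
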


\begin{corollary}[Oscillatory two-dimensional systems]\label{cor:complex}
For a real two-dimensional system whose eigenvalues form a non-real
complex-conjugate pair, no nonzero real $\mathbf{c}$ is a left-eigenvector of
$\mathbf{A}$. Hence predictive lookahead is structurally meaningful for every
real threshold direction.
\end{corollary}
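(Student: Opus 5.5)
The plan is to reduce the claim to a statement about real eigenvalues and then invoke Theorem~\ref{thm:spectral}(a). Suppose, for contradiction, that some nonzero real $\mathbf{c}\in\mathbb{R}^2$ satisfies $\mathbf{c}^\top\mathbf{A}=\lambda\mathbf{c}^\top$ for some $\lambda\in\mathbb{R}$; the theorem only considers real $\lambda$, so that is the only case to exclude. Transposing gives $\mathbf{A}^\top\mathbf{c}=\lambda\mathbf{c}$ with $\mathbf{c}\neq\mathbf{0}$. Hence $\lambda$ is a real eigenvalue of $\mathbf{A}^\top$.

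Since $\det(\mathbf{A}^\top-\mu\mathbf{I})=\det(\mathbf{A}-\mu\mathbf{I})$, the matrices $\mathbf{A}^\top$ and $\mathbf{A}$ share a characteristic polynomial. So $\lambda$ would be a real eigenvalue of $\mathbf{A}$. This contradicts the hypothesis that the eigenvalues of $\mathbf{A}$ form a non-real conjugate pair $re^{\pm i\theta}$ with $\sin\theta\neq0$.

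Therefore $\mathbf{c}^\top\mathbf{A}\neq\lambda\mathbf{c}^\top$ for every real $\lambda$, so $\mathbf{c}^\top$ and $\mathbf{c}^\top\mathbf{A}$ are linearly independent. Case~(a) of Theorem~\ref{thm:spectral} then gives $\mathcal{I}_H\neq\varnothing$ for every $H\geq1$, which is the claimed structural meaningfulness.

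Applying the theorem also requires checking its standing hypotheses for this class:
\begin{itemize}
\item Stability $\rho(\mathbf{A})<1$ is inherited from the setup of Sec.~\ref{sec:model}.
\item The condition $\Delta/\sigma_\infty\in\mathcal{C}$ is an assumption carried along, as in Remark~\ref{rem:cond_i}; part~(a) does not use it anyway.
\item Detectability of $(\mathbf{A},\mathbf{c}^\top)$ holds automatically for $n=2$. We have $\mathbf{c}\neq\mathbf{0}$, and no real one-dimensional $\mathbf{A}$-invariant subspace exists, so $\mathbf{c}$ cannot lie in the unobservable subspace. Equivalently, the observability matrix $[\mathbf{c}^\top;\mathbf{c}^\top\mathbf{A}]$ has full rank by the independence just shown.
\end{itemize}
I expect no step to be a real obstacle. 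The only subtlety is noting that a complex left-eigenvector would require a complex $\mathbf{c}$, which the statement excludes by restricting to real $\mathbf{c}$.
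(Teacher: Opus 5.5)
Your argument is correct and is essentially the paper's: a nonzero real $\mathbf{c}$ with $\mathbf{c}^\top\mathbf{A}=\lambda\mathbf{c}^\top$ forces $\lambda$ to be a real eigenvalue of $\mathbf{A}$ (which you make explicit via $\mathbf{A}^\top$ sharing the characteristic polynomial), a non-real conjugate pair rules this out, and Theorem~\ref{thm:spectral}(a) then applies. Your hypothesis checks are fine, though detectability needs no argument, since $\rho(\mathbf{A})<1$ already makes every pair $(\mathbf{A},\mathbf{c}^\top)$ detectable.
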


\begin{proof}
A real left-eigenvector corresponds to a real eigenvalue.
If the spectrum is a complex-conjugate pair, no real left-eigenvector exists.
\end{proof}
This corollary is practically important as many IIoT signals (e.g., vibration, structural health) are naturally oscillatory. The same conclusion extends to higher-dimensional systems with spectrum containing only non-real eigenvalues; for mixed spectra (real and complex eigenvalues coexisting), the test is applied to $\mathbf{c}^\top\mathbf{A}$. Numerically, set $\lambda^\star=(\mathbf{c}^\top\mathbf{A}\mathbf{c})/(\mathbf{c}^\top\mathbf{c})$ and test $\|\mathbf{c}^\top\mathbf{A}-\lambda^\star\mathbf{c}^\top\|$ against a chosen tolerance; if aligned, the sign of $\lambda^\star$ completes the test.

\subsection{How does channel quality change the value of lookahead?}

The structural result above tells us whether lookahead can help at all. It does not tell us how much of that structural opportunity survives a lossy wireless link. The operational part of the paper answers that question.

For a crossing at $T_e$, let $T_{\mathrm{trig}}$ be the first slot at which the trigger of Sec.~\ref{sec:model}-B fires, and let $q_H(j) := \Pr\{T_e - T_{\mathrm{trig}} = j\}$, $j \in \{1,\dots,H\}$, be the trigger-lead distribution; its mass need not sum to one, since with the remaining probability the trigger never fires before the crossing. This lead is set by the Kalman posterior and the threshold geometry alone, separate from whether the resulting transmissions are delivered. Under per-attempt independence, the probability that at least one of the $j$ transmission attempts succeeds is $1-p_{\mathrm{fail}}^j$. Therefore:

\begin{corollary}[Channel-conditioned operational value]\label{cor:channel}
Under the independent-attempt idealization with per-attempt failure probability $p_{\mathrm{fail}} \in (0,1)$, the probability of early detection decomposes as
\begin{equation}\label{eq:operational}
P(L>0 \mid H,p_{\mathrm{fail}})
= \sum_{j=1}^{H} q_H(j)\bigl(1-p_{\mathrm{fail}}^{\,j}\bigr),
\end{equation}
where $1-p_{\mathrm{fail}}^{\,j} = (1-p_{\mathrm{fail}})(1+p_{\mathrm{fail}}+\cdots+p_{\mathrm{fail}}^{\,j-1})$: relative to lead $1$, lead $j$ carries the weight $1+p_{\mathrm{fail}}+\cdots+p_{\mathrm{fail}}^{\,j-1}$, non-decreasing in $p_{\mathrm{fail}}$ with limit $j$. Thus, whenever $q_H$ places mass on leads $j\ge2$, those deeper-lead
contributions become relatively more valuable as the channel degrades.
\end{corollary}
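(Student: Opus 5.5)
The plan is to condition on the trigger lead and apply the law of total probability. The events $\{T_e - T_{\mathrm{trig}} = j\}$ for $j\in\{1,\dots,H\}$ are disjoint. The trigger of Sec.~\ref{sec:model}-B only fires when some $\hat{z}_j\le z^-$ with $j\le H$, and the monitor can only raise the alarm after a delivered transmission. So on the complementary event, where the trigger does not fire strictly before $T_e$, we have $L\le 0$, and that event contributes nothing to $P(L>0)$. This gives
\[
P(L>0\mid H,p_{\mathrm{fail}})=\sum_{j=1}^{H} q_H(j)\,\Pr\{L>0\mid T_e-T_{\mathrm{trig}}=j\}.
\]
We must also check that a lead cannot exceed $H$. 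A first firing at $T_{\mathrm{trig}}<T_e-H$ is possible in principle, for example through a false positive. I would handle this by the paper's convention that $q_H$ is defined on $\{1,\dots,H\}$: leads are truncated, or equivalently only the effective $H$-slot pre-crossing window counts. I would state this explicitly as part of the modelling assumption, not prove it.

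The key step is evaluating the conditional term. Given lead $j$, the sensor retransmits in slots $T_{\mathrm{trig}},\dots,T_e-1$, which is exactly $j$ attempts before the crossing. The event $L>0$ is the event that at least one of these $j$ attempts is delivered. Under the independent-attempt idealization, each attempt fails with marginal probability $p_{\mathrm{fail}}$ from~\eqref{eq:pfail}, independently of the others. We also need the failures to be independent of the trigger time, which holds because $q_H$ depends only on the Kalman posterior and the threshold geometry, while the channel is exogenous. The conditional probability is therefore $1-p_{\mathrm{fail}}^{\,j}$, and substituting gives~\eqref{eq:operational}. The main point requiring care is this independence between the channel process and the filtering/trigger process. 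I would state it as the assumption that the channel outcomes are independent of $(\mathbf{w}_k, v_k^s)$.

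The remaining claims are algebraic. The geometric-sum identity $1-p^{j}=(1-p)\sum_{i=0}^{j-1}p^i$ gives the factorization. The ratio $(1-p^j)/(1-p)=\sum_{i=0}^{j-1}p^i$ is a polynomial with nonnegative coefficients, so it is non-decreasing in $p\in(0,1)$, and it tends to $j$ as $p\to1^-$. For lead $1$ the weight is identically $1$. Thus the relative weight of any lead $j\ge2$ against lead $1$ grows with $p_{\mathrm{fail}}$, and strictly so since the $p^1$ coefficient is positive. This yields the concluding interpretive statement whenever $q_H(j)>0$ for some $j\ge2$.
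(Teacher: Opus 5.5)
Your proposal is correct and follows the paper's own argument. You condition on the channel-independent trigger lead $j\in\{1,\dots,H\}$, assign success probability $1-p_{\mathrm{fail}}^{\,j}$ to the $j$ independent pre-crossing attempts, sum over $j$, and read off the relative weight $1+p_{\mathrm{fail}}+\cdots+p_{\mathrm{fail}}^{\,j-1}$ from the geometric-sum identity. The paper uses the same two modelling assumptions but states them only informally in the text before the corollary: that channel outcomes are independent of the process and measurement noise, and that no lead exceeds $H$. You make both explicit.
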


The deeper lookahead acts as a rescue mechanism whose per-unit value scales with $p_{\mathrm{fail}}$. 
Eq.~\eqref{eq:operational} is a baseline independent-attempt expression, capturing the value of passive repetition over the lead window. Channel-aware use of the lead time, such as power control, rate adaptation, or scheduling against predicted channel states, may improve on it, whereas burst-correlated outage can reduce the independence gain, as quantified in Sec.~\ref{sec:empirical}.

\subsection{Scope and Limits of the Spectral Rule}
The structural theorem answers the yes-or-no question: Can prediction create an early-alarm opportunity unavailable to the current-only trigger? The channel corollary answers the operational question: if prediction is useful, how does the benefit depend on link quality? The remaining issue is quantitative: even when prediction is structurally allowed, is the benefit always large? The simulations below show that it is not always large, which motivates a final refinement based on how much prediction information survives over the horizon.

%
\begin{table}[t]
\centering
\caption{Simulation setup.}
\label{tab:setup}
\setlength{\tabcolsep}{4pt}
\footnotesize
\resizebox{0.85\columnwidth}{!}{
\begin{tabular}{@{}p{0.32\columnwidth} p{0.68\columnwidth}@{}}
\toprule
\multicolumn{2}{@{}l}{\textbf{System dynamics}} \\
Scalar         & $A = 0.9$; $c = 1$; $Q_w = 1$; $\Delta = 0.3\sigma_x$ \\
2D oscillating & $A = \begin{psmallmatrix} 0 & 1 \\ -0.9 & 1.8 \end{psmallmatrix}$ (eigs $0.9 \pm 0.3j$); $c = (0.5, 1)^\top$; $Q_w = \mathrm{diag}(0,1)$; $\Delta = 5.0$ \\
$A_{\text{fast}}$ & eigs $\{0.8, 0.5\}$; $c = (1, 0)^\top$; $Q_w = \mathrm{diag}(0,1)$; $\Delta = 0.75\sigma_x$ \\
$A_{\text{int}}$  & eigs $\{0.9, 0.8\}$; otherwise same as $A_{\text{fast}}$ \\
$A_{\text{slow}}$ & eigs $\{0.95, 0.9\}$; otherwise same as $A_{\text{fast}}$ \\
\midrule
\multicolumn{2}{@{}l}{\textbf{Sensing and channel}} \\
$R_v^{(1)}, R_v^{(2)}$ & $0.1$ (sensor 1); $0.4$ (sensor 2, two-sensor only) \\
Fading & Rayleigh ($\Omega_1 = 1.5$, sensor 1); Rician ($\Omega_2 = 0.7$, $K_2 = 3$; sensor 2) \\
Blockage duration & Weibull ($\kappa{=}2$; scale $5$/$3$ slots) \\
FBL (channel uses; info bits; Tx power / noise) & $N_B{=}128$; $L_B{=}256$; $p_{\text{tx}}/\sigma_n^2{=}17$\,dB \\
$\alpha_{\text{FP}}, \alpha_{\text{FN}}$ & $0.15$ (default); $\{0.05, 0.10\}$ also tested \\
\midrule
\textbf{Simulation Protocol} & $10$ seeds; $N_{\text{EP}}{\times}T_{\text{sim}} = 30{\times}300$ (episodes\,$\times$\,slots); CIs via $t$-dist.\ at $95\%$ \\
\bottomrule
\end{tabular}
}
\end{table}


\section{Empirical Evaluation}\label{sec:empirical}

This section validates Sec.~\ref{sec:main} empirically: the spectral characterization, the magnitude of the gain once unlocked, the effect of channel quality, and the two-sensor extension. Unless otherwise stated, parameters follow Table~\ref{tab:setup}, with results over ten seeds and $95\%$ CIs. We isolate the spectral effect with a single sensor; the two-sensor case follows in Sec.~\ref{sec:multi_sensor_extension}. Table~\ref{tab:dynamics_character} gives the per-system detection probabilities at $H{=}1$ and $H{=}5$ used throughout this section.

\subsection{Validation of the Spectral Characterization}

We first compare two systems on opposite sides of Theorem~\ref{thm:spectral}. The scalar system satisfies $\mathbf{c}^\top \mathbf{A} = \mathbf{A}\mathbf{c}^\top$ with $\lambda \geq 0$, and is therefore spectrally locked. In contrast, the oscillatory two-dimensional system has complex eigenvalues and admits no real left-eigenvector, making predictive lookahead structurally meaningful by Corollary~\ref{cor:complex}. For the scalar system (Table~\ref{tab:dynamics_character}), $P(L>0)=0.013\pm0.003$ is independent of $H$, confirming that predictive lookahead creates no new triggering opportunities. The oscillatory system, by contrast, climbs from $0.003$ at $H{=}0$ to $0.330$ at $H{=}7$ before saturating (Fig.~\ref{fig:dynamics_character}(a)). The gain accrues with diminishing returns and levels off by $H{\approx}5$, which is why we adopt $H{=}5$ as the working horizon; beyond it, additional lookahead yields little because the trigger lead rarely extends that far.

\subsection{Beyond Structural Unlock}

\begin{table}[t]
\centering
\caption{Dynamics-character sweep, single-sensor (clean channel, blockage onset $\PoutZero = 0.05$).}
\label{tab:dynamics_character}
\setlength{\tabcolsep}{3.5pt}
\resizebox{0.8\columnwidth}{!}{
\begin{tabular}{l c c c}
\toprule
System & $\rho(A)$ & $P(L{>}0; H{=}1)$ & $P(L{>}0; H{=}5)$ \\
\midrule
Scalar (0.9)        & $0.9$ (locked) & $0.013 \pm 0.003$ & $0.013 \pm 0.003$ \\
$A_{\text{fast}}$   & $0.8$ & $0.019 \pm 0.005$ & $0.019 \pm 0.005$ \\
$A_{\text{int}}$    & $0.9$ & $0.030 \pm 0.009$ & $0.032 \pm 0.011$ \\
$A_{\text{slow}}$   & $0.95$ & $0.104 \pm 0.025$ & $0.241 \pm 0.014$ \\
2D oscillating      & $0.95$ & $0.346 \pm 0.012$ & $0.524 \pm 0.022$ \\
\bottomrule
\end{tabular}
}
\vspace{-6pt}
\end{table}

The spectral condition determines whether lookahead helps, not how much. We compare three unlocked real-eigenvalue systems of increasing spectral radius with the oscillatory system. Fig.~\ref{fig:dynamics_character}(b) plots the $H{=}5$ gain against $\rho(\mathbf{A})^H$, the fraction of magnitude surviving $H$ steps (formalized in Remark~\ref{rem:operational}), with the underlying per-system values in Table~\ref{tab:dynamics_character}. Although all three real-eigenvalue systems are unlocked, their gains differ sharply: the fast-decaying system ($A_{\mathrm{fast}}$)  gains almost nothing, the intermediate ($A_{\mathrm{int}}$) little, and the slow-decaying system ($A_{\mathrm{slow}}$) substantially. The gain is negligible below $\rho^H\!\approx\!0.6$ and large above
$\approx\!0.7$. The locked scalar system sits at zero
gain despite a moderate $\rho^H$, confirming that magnitude alone is insufficient without
structural unlock. And at matched magnitude ($\rho^H\!=\!0.77$), the oscillating system exceeds the slow real one, a rotation bonus formalized next. Together, these observations show that passing the non-collinearity test is necessary but not sufficient for a large gain. Remark~\ref{rem:operational} formalizes the two requirements the figure exposes: sufficient prediction magnitude retained over the horizon, and sufficient separation between the current and predicted threshold directions, the latter supplied by rotation in the complex case.
\begin{figure}[!t]
    \centering
  \subfloat[]{%
       \includegraphics[width=0.7\linewidth]{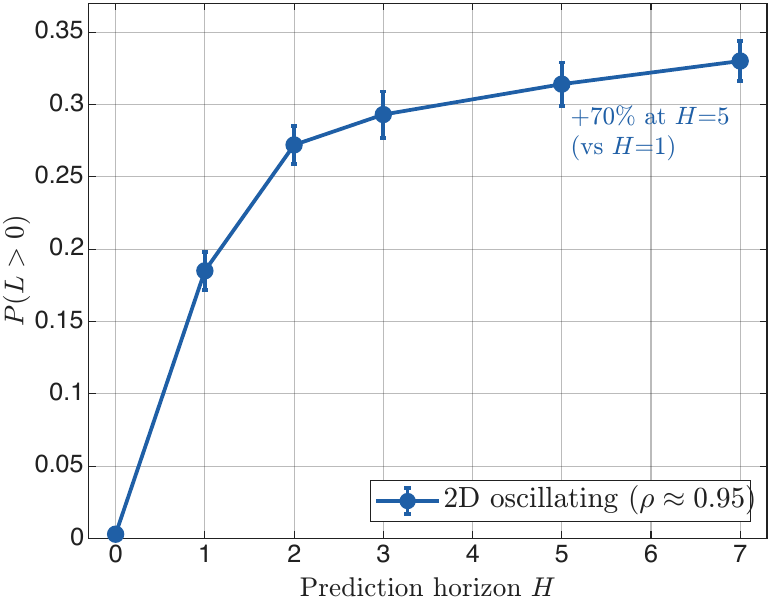}\label{fig:dynamics_character-a}}\\[-1pt]
  \subfloat[]{%
        \includegraphics[width=0.7\linewidth]{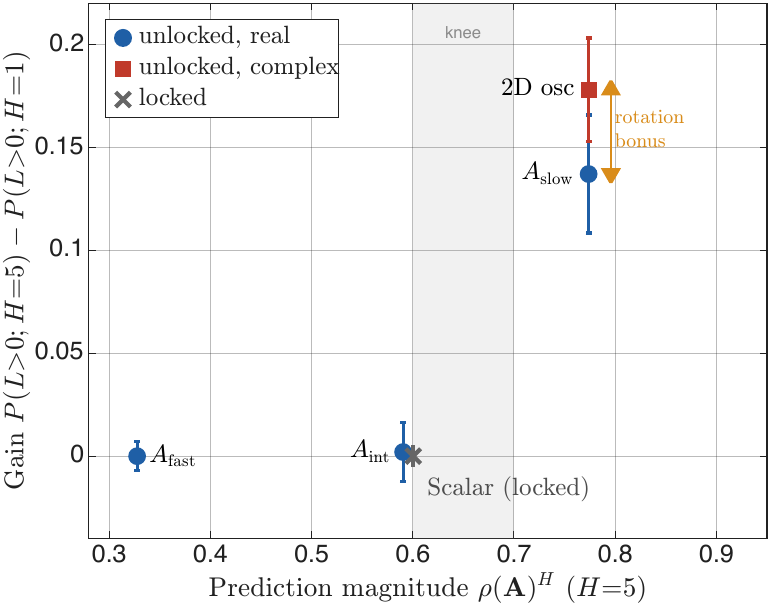}\label{fig:dynamics_character-b}}\\[-0.5pt]      
  \caption{Operational benefit beyond structural unlock. (a) 2D oscillating system, stressed channel ($\PoutZero{=}0.20$, $\rho \approx 0.95$): $P(L>0)$ grows with $H$. (b) Absolute gain $P(L{>}0;H{=}5) - P(L{>}0;H{=}1)$ versus retained prediction magnitude $\rho(\mathbf{A})^H$ at $H{=}5$: locked and low-magnitude systems cluster near zero, while gain emerges past the knee ($\rho(\mathbf{A})^H \approx 0.6$--$0.7$, shaded). At matched magnitude ($\rho(\mathbf{A})^H \approx 0.77$), the complex system (2D osc) exceeds the real one ($A_{\text{slow}}$) by the rotation bonus of Remark~\ref{rem:operational}.}
  \label{fig:dynamics_character}
  \vspace{-14pt}
\end{figure}  

\begin{remark}[Operational sufficiency, refined]\label{rem:operational}

The conditional variance of the predicted $z$-score given the current one is
\begin{equation}\label{eq:conditional_variance}
\mathrm{Var}(\hat z_j \mid \hat z_0)\! =\! \frac{1}{\sigma_j^2}\left[\mathbf{c}^\top \mathbf{A}^j \Pss (\mathbf{A}^j)^\top \mathbf{c} - \frac{(\mathbf{c}^\top \mathbf{A}^j \Pss \mathbf{c})^2}{\mathbf{c}^\top \Pss \mathbf{c}}\right].
\end{equation}
It vanishes when $(\mathbf{A}^j)^\top\mathbf{c}$ is collinear with $\mathbf{c}$ and is positive otherwise. For non-collinear systems, a large operational gain requires its standard deviation to be comparable to the confusion half-width $|\zminus|$, governed by a \emph{magnitude} factor $\rho(\mathbf{A})^H$ and a \emph{transversality} factor $|\sin\theta_H|$, where $\theta_H$ is the angle between $\mathbf{c}$ and $(\mathbf{A}^H)^\top\mathbf{c}$ in the $\Pss$ inner product. Small $\rho^H$ fails the first; near-alignment fails the second, while complex eigenvalues gain transversality through rotation.
\end{remark}

\subsection{Effect of Channel Quality}
We next evaluate Corollary~\ref{cor:channel}, which predicts that deeper lookahead becomes relatively more valuable as the channel quality deteriorates. Fig.~\ref{fig:channel_conditioning}(a) shows the results for the oscillatory system under increasing blockage onset $\PoutZero \in \{0.02, 0.10, 0.20, 0.30\}$ (clean, moderate, stressed, severe). As expected, the absolute value of $P(L>0)$ decreases as the channel worsens.  The relative benefit of deeper prediction, however, increases: the ratio $P(L>0;H{=}5)/P(L>0;H{=}1)$ grows from about $1.5$ on clean channels to $1.9$ under severe outage, matching the interpretation of Corollary~\ref{cor:channel}.

A control experiment isolates burst correlation. At the severe channel, replacing the Weibull-duration outage with an independent per-slot process at matched marginal probability raises the $H{=}5/H{=}1$ ratio from $1.87$ (bursty) to $1.99$ (independent). The small gap indicates that burst correlation accounts for only a minor part of the deeper-lookahead gain, so the independent-attempt idealization of Corollary~\ref{cor:channel} captures the effect well despite the correlated outage process.

The result is robust to both the decision tolerance and the threshold direction. Tightening the tolerance at the severe channel to $\alpha\in\{0.05,0.10\}$ widens the confusion region and lowers absolute $P(L>0)$, yet both benefits persist: the structural gain ($0.146$ at $H{=}5$ vs.\ $0.002$ at $H{=}0$, $\alpha{=}0.05$) and the deeper-lookahead ratio ($1.57$--$1.87$). Two further unlocked directions, $\mathbf{c}=(0.2,1)^\top$ and $(0.8,1)^\top$, give ratios of $1.88$, matching the baseline $1.87$.

\subsection{Two-Sensor Extension}\label{sec:multi_sensor_extension}

\begin{figure}[!t]
    \centering
  \subfloat[Single sensor]{%
       \includegraphics[width=0.49\linewidth]{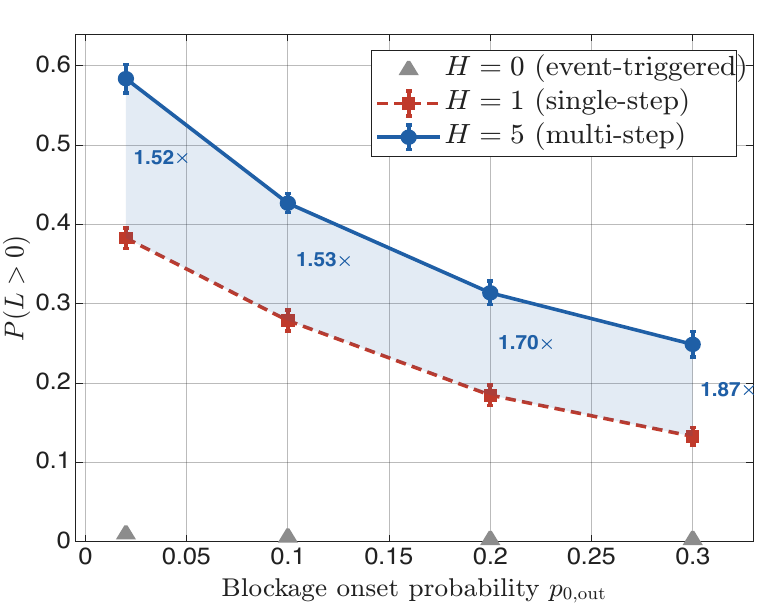}\label{fig:sota-a}}\hfill
  \subfloat[Two Sensors]{%
        \includegraphics[width=0.47\linewidth]{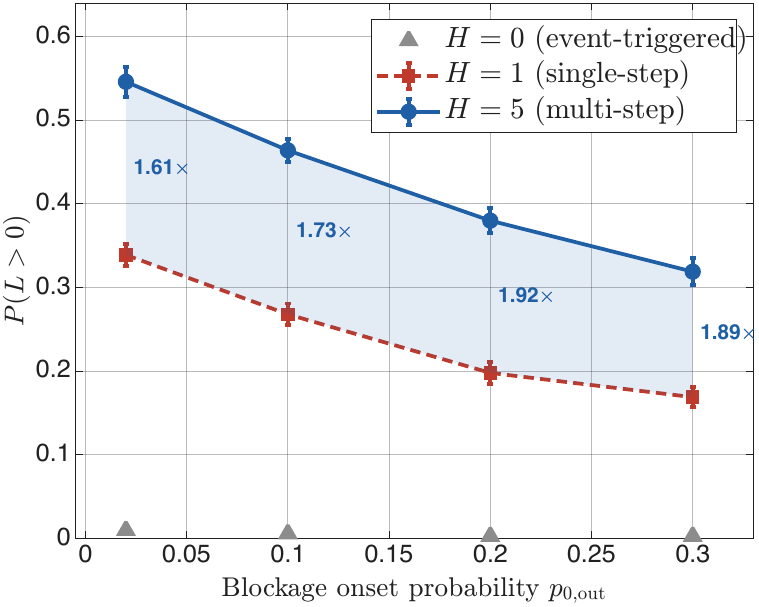}\label{fig:sota-b}}   
\caption{Early-detection probability $P(L{>}0)$ versus blockage-onset probability $\PoutZero$ for the 2D oscillating system, at horizons $H{=}0,1,5$. (a) single sensor; (b) two sensors. $H{=}0$ is the react-only (event-triggered) baseline; predictive lookahead ($H{\geq}1$) improves on it by an order of magnitude at low blockage ($\PoutZero{=}0.05$). Printed ratios give the deeper-lookahead gain $H{=}5/H{=}1$, which grows as the channel worsens.}
\vspace{-10pt}  \label{fig:channel_conditioning} 
\end{figure}  

Finally, we extend the framework to two sensors. Because only the marginal per-attempt failure probability enters Corollary~\ref{cor:channel}, the extension is
a direct test of the same theory: we add a noisier second sensor ($R_v^{(2)}=0.4$, Rician-faded) with its own independent link to the monitor, each running its own Kalman filter. The alarm decision uses the lower-variance posterior, while the transmitting sensor is chosen each slot with equal probability, so the delivered estimate is occasionally the noisier sensor's, trading some estimate quality for the independent outage path. Selecting the transmitting sensor based on the current link state could improve this further and is left to future work.

Fig.~\ref{fig:channel_conditioning}(b) shows the same qualitative trends as the single-sensor case, now structured by a sensing-channel tradeoff. On the clean channel, the single-sensor configuration performs slightly better, because mixing in the noisier second sensor degrades the remote estimate. As the channel worsens, the second sensor's independent outage path pulls two-sensor ahead: already at the moderate channel for $H{=}5$ ($0.463$ vs.\ $0.427$), and across both horizons on stressed and severe channels.

More importantly, the deeper-lookahead gain not only carries over but strengthens in
poor channels: mixing adds residual prediction uncertainty that deeper lookahead mitigates. This confirms that the channel-conditioned scaling of Corollary~\ref{cor:channel} is not limited to the single-sensor setting.

\subsection{Design Implications}

Combining the theoretical and empirical results yields a three-stage design rule:
\begin{enumerate}[leftmargin=*]
    \item \emph{Structural:} Apply the spectral test. If
          $\mathbf{c}^\top\mathbf{A}=\lambda\mathbf{c}^\top$ with $\lambda\geq0$,
          predictive lookahead is redundant, and the react-only rule ($H=0$) suffices.
    \item \emph{Magnitude:} For non-collinear unlocked systems, check $\rho(\mathbf{A})^H$ against the confusion-region width; across tested cases the transition lies near $\rho(\mathbf{A})^H \approx 0.6$--$0.7$ (Fig.~\ref{fig:dynamics_character}(b)), with complex eigenvalues amplifying the benefit through rotation.
    \item \emph{Operational:} Use larger horizons when communication reliability is
          poor, since the benefit of additional transmission opportunities grows with
          the packet-failure probability (Corollary~\ref{cor:channel}).
\end{enumerate}
In practice, oscillatory systems satisfy all three conditions and benefit most from predictive triggering.

\section{Conclusion}\label{sec:conclusion}
We characterized when predicting ahead is worthwhile before triggering an alarm over a lossy link. A one-line spectral test on $\mathbf{c}^\top\mathbf{A}$ and $\mathbf{c}^\top$ certifies when lookahead can create a new early-alarm opportunity; once it does, deeper prediction grows more valuable as the channel worsens. For non-collinear systems, structural unlock alone is not enough: prediction must also retain magnitude over the horizon, with oscillatory dynamics adding benefit through rotation. The resulting three-stage rule—check structure, magnitude, then channel—extends to two sensors through a sensing-channel tradeoff.

\bibliographystyle{IEEEtran}
\bibliography{references}

\end{document}